\documentclass[12pt,oneside,reqno,dvipsnames]{amsart}
\usepackage{graphicx}% Include figure filse
\usepackage{caption,subcaption}
\usepackage{dcolumn}% Align table columns on decimal point
\usepackage{amsmath,amsthm,amssymb}
\usepackage{todonotes}
\usepackage{url}
\usepackage{placeins}

\usepackage{booktabs}

\usepackage{euscript}

%\addtolength{\topmargin}{-1pc}
%\addtolength{\textheight}{2pc}
 \addtolength{\oddsidemargin}{-2pc}
 \addtolength{\textwidth}{4pc}
%[section]
%[section]
%[section]
\newtheorem*{lemma*}{Lemma}
\newtheorem*{rem*}{Remark}

\usepackage{amsthm}
\usepackage{latexsym}
\usepackage{amssymb}
\usepackage{amsmath}
\usepackage{mathrsfs}

\begin{document}

\title[AdS Robin solitons and their stability]
{AdS Robin solitons and their stability}

\author{Piotr Bizo\'n}
\address{Institute of Physics, Jagiellonian
University, Krak\'ow, Poland}
\email{bizon@th.if.uj.edu.pl}

\author{Dominika Hunik-Kostyra}
\address{Institute of Physics, Jagiellonian
University, Krak\'ow, Poland}
\email{dominika.hunik@uj.edu.pl}

\author{Maciej Maliborski}
\address{Gravitational Physics, Faculty of Physics, University of
  Vienna, Boltzmann\-gasse 5, A-1090 Vienna, Austria}
\email{maciej.maliborski@univie.ac.at}

% ----------------------------------------------------------------
\begin{abstract}

We consider the four-dimensional  Einstein-Klein-Gordon-AdS system
 with conformal mass  subject to the Robin boundary conditions at infinity. Above a critical value of the Robin parameter, at which the AdS spacetime goes linearly unstable, we prove existence of  a family of globally regular static solutions (that we call AdS Robin solitons) and discuss their properties.

  \end{abstract}
\maketitle

\section{Introduction}
We consider the four-dimensional  Einstein-Klein-Gordon-AdS system
 with mass $\mu$ related to the negative cosmological constant $\Lambda$ through $\mu^2=\frac{2}{3} \Lambda$. For this, and only this, value of mass the system is conformally well-behaved at null and spatial infinity and consequently the initial-boundary value problem is well-posed for a variety of different boundary conditions at infinity \cite{f2,hw}. Here, we focus on the one-parameter family of  Robin boundary conditions. It has been known that along this family there is a critical parameter value at which the system undergoes a bifurcation: the (zero energy) anti-de Sitter (AdS) spacetime becomes linearly unstable  above that critical value \cite{iw} and there emerges a pair of (negative energy) globally regular static solutions (henceforth called AdS Robin solitons) \cite{hh}. The main goal of this paper is to establish the existence of AdS Robin solitons rigorously and analyze the structure of the bifurcation in more detail. In preparation of  future analysis of the role of solitons in dynamics, we also determine their  spectrum of linearized perturbations.

\section{Setup}
The Einstein-Klein-Gordon-AdS  system is given by
\begin{subequations}
\begin{align}\label{ekg}
& G_{\alpha\beta}+ \Lambda g_{\alpha \beta} =
 8 \pi G
  \left(\partial_{\alpha} \phi \,\partial_{\beta} \phi - \frac{1}{2} \left(g^{\mu\nu} \partial_{\mu}\phi \, \partial_{\nu}\phi +\mu^2 \phi^2\right)\, g_{\alpha\beta}\right)\;,\\
&  \Box_g \phi - \mu^2 \phi=0,
\end{align}
\end{subequations}
where $\Box_g=g^{\alpha\beta} \nabla_{\alpha}\nabla_{\beta}$ is the wave operator associated with  the metric $g_{\alpha\beta}$, $\mu$ is the mass of the scalar field, and $\Lambda$ is a negative constant. We assume  spherical symmetry
and write  the metric in the form
\begin{equation}\label{g}
 g= \frac{\ell^2}{\cos^2{\!x}}\left( -A e^{-2 \delta} dt^2 + A^{-1} dx^2 + \sin^2{\!x} \, d\omega^2\right)\,,
\end{equation}
where $(t,x,\omega)\in (-\infty,\infty) \times [0,\pi/2) \times \mathbb{S}^2$, $d\omega^2$ is the round metric on $\mathbb{S}^2$ and $\ell^2=-3/\Lambda$. The metric functions $A,\delta$ and the scalar field $\phi$ depend on $(t,x)$. We choose units such that $\ell=1$ and $4\pi G=1$ and  introduce  new variables
\begin{equation}
f= \frac{\phi}{\cos{x}}\quad\mbox{and} \quad B = \frac{A-1}{\cos^2{\!x}}\,.
\end{equation}
Then the system (1) reduces to
\begin{subequations}
\begin{align}
(\Box_{\hat{g}}-1) f =& \frac{2+\mu^2}{\cos^2 x} f-\left( 1-3\cos^2 x\right) B f-\mu^2\sin^2 x \; f^3\;,
\\
\cos x \; \partial_x B =& -\frac{B}{\sin x} - \sin x \; (1+B \cos^2 x)\;\Phi-\mu^2 \sin x \; f^2\;,
\\
\partial_x \delta =& -\sin x\cos x \; \Phi\;,\\
\partial_t B=& -2 A \sin{x} \;(\cos x \,\partial_x f -f \sin x)\; \partial_t f\;,
\end{align}
\end{subequations}
where $\Phi = (\cos x \,\partial_x f -f \sin x)^2+A^{-2} e^{2\delta} \cos^2 x \; (\partial_t f)^2$ and
$$
\Box_{\hat{g}}=-e^{\delta} \partial_t\left(A^{-1} e^{\delta} \partial_t\right)+\frac{e^{\delta}}{\sin^2{\!x}}\,\partial_x\left(A e^{-\delta} \sin^2{\!x} \, \partial_x\right)
$$
is the polar wave operator associated with the conformal metric $\hat g_{\alpha\beta}=\cos^2{\!x}\, g_{\alpha\beta}$.
On the right side of equation (4a) the derivatives of metric functions were eliminated using equations (4b) and (4c).

  In the following we set $\mu^2=-2$. For this value of mass the wave equation (4a) is regular at $x=\pi/2$ because the first  term on the right  side (which is the only singular term) vanishes\footnote{This cancellation is due to the fact that for $\mu^2=-2$ the left sides of equations (1b) and (4a) are asymptotically conformal, that is
 $(\Box_g+2)\phi \approx (\cos{x})^{-3} (\Box_{\hat g}-1) f$ near $x=\pi/2$. However, the constraint equation (4b) has a singularity at $x=\pi/2$ so it does not appear possible to extend the solutions `beyond infinity' (cf. \cite{f3} where an extension of solutions across the conformal boundary at \emph{timelike} infinity was analyzed for the system (1) with $\mu^2=\frac{2}{3} \Lambda>0$).}.
   Thanks to this fact, the initial-boundary value problem for the system (4) is well posed for a variety of boundary conditions  at the conformal boundary (both reflective and dissipative) \cite{f2,hw}\footnote{This should be contrasted with the widely studied massless case  for which only the Dirichlet boundary condition is compatible with the basic requirement of finite  total mass  \cite{br}.}. In this paper we focus our attention on the one-parameter family of Robin boundary conditions
\begin{equation}\label{robin}
\partial_x f-b\, f\vert_{x=\frac{\pi}{2}}=0,
\end{equation}
where $b$ is a constant (hereafter referred to as the Robin parameter). For $b=0$ the Robin condition reduces to the Neumann condition $\partial_x f\vert_{x=\pi/2}=0$.

Assuming \eqref{robin} and expanding the fields in power series in $z=\pi/2-x$
we obtain the following asymptotic behavior near $z=0$
\begin{align}
f(t,x) &=\alpha - b \alpha z + \mathcal{O}(z^2)\;,\\
B(t,x) &= \alpha^2 - (3 b \alpha^2 +M) z + \mathcal{O}(z^2)\;,\\
\delta(t,x) &=\delta_{\infty}+\frac{1}{2}\alpha^2 z^2 + \mathcal{O}(z^3)\;,
\end{align}
where $\alpha(t)$ and $\delta_{\infty}(t)$ are free functions\footnote{We use the normalization $\delta(t,0)=0$, hence $t$ is the proper time at the center.} and $M$ is a constant. To see the physical meaning of $M$, let us define the renormalized mass function
\begin{equation}\label{m}
m=-B \tan{x} + \frac{\sin^3{x}}{\cos{x}}\, f^2\,.
\end{equation}
The first term on the right side is the Misner-Sharp  mass function defined by $m_{MS}=r (1+r^2-g^{\mu\nu} \partial_{\mu} r \partial_{\nu} r)$, where $r=\tan{x}$ is the areal radial coordinate. This function diverges as $x\rightarrow \pi/2$ and the purpose of the second term (called the counterterm) is to cancel this divergence. The leading order behavior of the counterterm is determined by the asymptotics (6) and (7) but otherwise  can be chosen freely.
Using equation (4b) we get
\begin{equation}\label{mprim}
  \partial_x m=\rho \sin^2{x}\;,
\end{equation}
where
\begin{equation}\label{rho}
\rho=A^{-1} e^{2\delta} (\partial_t f)^2 +(\partial_x f)^2 + f^2 +B (\cos{x}\; \partial_x f -f \sin{x})^2\;,
\end{equation}
hence
\begin{equation}\label{m-vol}
m(t,\pi/2)=\int_{0}^{\pi/2} \rho \sin^2{x}\,dx.
\end{equation}
This  quantity can be interpreted as the bulk energy.
From the asymptotic expansions  (6) and (7) it follows that
\begin{equation}\label{masym}
M=m(t,\pi/2)-b \alpha^2(t),
\end{equation}
where the second term on the right side can be viewed as the energy stored on the boundary. Although both the bulk and  boundary energies are time dependent, their sum $M$ is conserved. In what follows, we will refer to $M$ as the total energy (mass). The exchange of energy between the bulk and the boundary is a characteristic feature of systems subject to the  Robin boundary conditions. Note that some of the bulk  energy is ``lost" to the boundary if $b<0$ and  ``gained" from the boundary if $b>0$.

We remark that the expression \eqref{masym} can be obtained in a systematic way  within the diffeomorphism  covariant Hamiltonian framework of Wald and Zoupas \cite{wz} (see section 2.2 in \cite{hhol}).
Nonetheless, we believe that our hands-on approach, based solely on the analysis of the system (4), is helpful in getting insight into not so widely known physics of the Robin boundary conditions.
\section{AdS Robin solitons}
For  time-independent solutions  the system (4) with $\mu^2=-2$ takes the form
\begin{subequations}
  \begin{align}
    \label{eq:1}
&(1+B\cos^2{\!x}) f''+\cot{x} \left(2+(1-4\sin^2{\!x}) B + 2 \sin^2{\!x}\, f^2\right) f'-f \nonumber\\
 &\qquad\qquad\qquad\,\,\, +\left(1-3\cos^2{\!x}\right) B f-2\sin^2{\!x} \; f^3=0\;,
    \\
    \label{eq:2}
    & \cot{x} \; B' +\frac{B}{\sin^2{\!x}} + (1+B \cos^2{\!x})\;(\cos x \,f'-\sin{x} \,f)^2 -2 f^2=0\;,\\
    \label{eq:3}
& \delta' + \sin{x} \cos{x}\,(\cos x \,f'-\sin{x} \,f)^2=0\,,
\end{align}
\end{subequations}
where the derivatives of  metric functions were eliminated from equation (\ref{eq:1}) using equations (\ref{eq:2}) and (\ref{eq:3}). It is routine to prove that this system has local solutions near $x=0$ which behave as follows
\begin{equation}\label{static0}
f(x) \sim c +\frac{1}{6} c x^2,\quad B(x) \sim \frac{2}{3} c^2 x^2,\quad \delta(x)\sim -\frac{1}{9} c^2 x^4\,,
\end{equation}
where $c$ is a free parameter.
\begin{lemma*}
For any $c$ the local solution \eqref{static0} extends smoothly up to $x=\pi/2$ and fulfills the boundary conditions (6)-(8).
\end{lemma*}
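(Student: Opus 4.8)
The plan is a continuation argument for the ODE system \eqref{eq:1}--\eqref{eq:3}, with the substance in a priori estimates. On the open interval $(0,\pi/2)$ this is a regular first-order ODE for $(f,f',B)$ — the field $\delta$ decouples, being recovered from \eqref{eq:3} by the quadrature $\delta(x)=-\int_0^x\sin s\cos s\,(\cos s\,f'-\sin s\,f)^2\,ds$, which also realises the normalization $\delta(0)=0$ — and it is singular only at the two endpoints. By the local analysis quoted just above the lemma, for each $c$ the solution \eqref{static0} exists on a maximal interval $[0,X)$ with $X\le\pi/2$; since $\cot x$, $1/\sin^2 x$ and $1/\cos^2 x$ remain finite at any $X<\pi/2$, the standard continuation criterion says that if $X<\pi/2$ then either $|f|+|f'|+|B|$ blows up or $A:=1+B\cos^2 x\to0$ as $x\to X^-$. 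The lemma thus reduces to excluding both and then identifying the behaviour at $x=\pi/2$.

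First I would exclude a horizon. Differentiating \eqref{eq:2} and using $B=(A-1)/\cos^2 x$ to eliminate $B$ and $B'$ shows that $A$ satisfies the linear equation
\begin{equation*}
A'+\Bigl(\tfrac{1+2\sin^2 x}{\sin x\cos x}+\sin x\cos x\,(\cos x\,f'-\sin x\,f)^2\Bigr)A=2\sin x\cos x\,f^2+\tfrac{1+2\sin^2 x}{\sin x\cos x}\,,\qquad A(0)=1\,.
\end{equation*}
Both the coefficient of $A$ and the source are strictly positive on $(0,\pi/2)$, so the integrating-factor representation gives $A>0$ on all of $[0,X)$, with $A$ bounded away from zero on every compact subinterval (hence \eqref{eq:1} never degenerates) and $A\to1$ as $x\to\pi/2$ because $B$ stays finite there; equivalently, at a hypothetical first zero of $A$ the equation forces $A'>0$, a contradiction. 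So the only surviving obstruction to $X=\pi/2$ is blow-up of $(f,f')$, which through the constraint then also controls $B$.

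Bounding $(f,f')$ on $[0,X)$ is the heart of the matter. I would work with the renormalized mass \eqref{m}: $m(0)=0$ and, by \eqref{mprim}--\eqref{rho}, $m'=\rho\sin^2 x$ with
\[
\rho=(f')^2+f^2+B\,P^2=\frac{1}{\cos^2 x}\bigl[A\,P^2+2\sin x\,f\,P+f^2\bigr],\qquad P:=\cos x\,f'-\sin x\,f;
\]
completing the square, $\rho\ge f^2(1+B)/A$, which is nonnegative whenever $B\ge-1$. This suggests a bootstrap starting from $B(0)=0$: while $B\ge-1$ one has $\rho\ge0$, so $m$ is nondecreasing and $\ge0$, and then from $m=\tan x\,(\sin^2 x\,f^2-B)$ — equivalently $B=\sin^2 x\,f^2-m\cot x$ and $A=1+\sin^2 x\cos^2 x\,f^2-m\cot x\cos^2 x$ — one obtains the two-sided bound $-m\cot x\le B\le\sin^2 x\,f^2$. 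Feeding this back into \eqref{eq:1} and closing the estimate by a Gr\"onwall-type argument for a suitable energy functional built from $f$, $f'$ and $B$ (whose differential inequality has an integrable kernel near $x=\pi/2$, precisely because $A\to1$ and $\cot x\to0$ there) is then expected to give uniform bounds on $f,f'$, hence on $B$ and $m$, and in particular to propagate $B\ge-1$; this closes the continuation and yields $X=\pi/2$. I expect this to be the main obstacle: the self-interaction $-2\sin^2 x\,f^3$ in \eqref{eq:1} has the ``defocusing'' sign and could a priori drive $f$ to infinity before $x=\pi/2$, so one must really exploit the coupling of \eqref{eq:1} to the constraint \eqref{eq:2} — the monotone mass, the algebraic relations above, and $A>0$ — to rule this out for \emph{every} value of $c$.

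It remains to read off the behaviour at $x=\pi/2$. For $\mu^2=-2$ the point $x=\pi/2$ is a \emph{regular} point of \eqref{eq:1}: the coefficient of $f''$ tends to $A(\pi/2)=1$ and the coefficient of $f'$ carries the factor $\cot x\to0$. Hence $f$ extends as a $C^\infty$ (in fact real-analytic) solution across $x=\pi/2$; inserting its Taylor series in $z=\pi/2-x$ into \eqref{eq:1} produces the expansion (6) with $\alpha=f(\pi/2)$ and $f'(\pi/2)=b\,f(\pi/2)$, i.e. the Robin condition \eqref{robin} with $b=b(c):=f'(\pi/2)/f(\pi/2)$ (for $c=0$ the solution is AdS itself). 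Equation \eqref{eq:2} has a regular singular point at $x=\pi/2$ whose homogeneous solution vanishes like $z$ (indicial exponent $+1$), and whose source enjoys the cancellation $2f^2-P^2=\alpha^2+O(z^2)$; consequently every solution is regular there, with $B(\pi/2)=f(\pi/2)^2=\alpha^2$ and no logarithmic terms, and matching coefficients gives (7), with $M$ the constant determined by the construction. Finally \eqref{eq:3} integrates to (8) with $\delta_\infty=-\int_0^{\pi/2}\sin s\cos s\,P^2\,ds$. This establishes (6)--(8) and completes the proof; compared with the a priori bound, the endpoint analysis and horizon-avoidance step are routine.
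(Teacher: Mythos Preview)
Your continuation framework, the horizon-avoidance step via the linear ODE for $A$, and the endpoint analysis at $x=\pi/2$ are all fine. The genuine gap is exactly where you flag it yourself: the a priori bound on $(f,f')$. You describe a bootstrap based on the renormalized mass and then defer the closure to ``a suitable energy functional'' and a ``Gr\"onwall-type argument'', but you never specify the functional or carry out the estimate. Since this is the whole content of the lemma---the cubic term $-2\sin^2\!x\,f^3$ has the dangerous sign, as you note---the proposal as it stands is not a proof.

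The paper closes this gap with two concrete ingredients, both cleanest after passing to $r=\tan x$ and the original variables $\phi=f\cos x$, $A=1+B\cos^2\!x$. First, it shows that $B$ (viewed as a function of $r$) is \emph{monotone increasing}: at a putative maximum $r_0$ one differentiates the constraint, eliminates $\phi''$ and $B$ using the field equation and the constraint itself, and finds $B''(r_0)$ equal to a manifestly positive sum of squares. Since $B>0$ near the origin, this gives $A\ge 1$ everywhere---a much stronger statement than your $A>0$, and the one actually needed for the next step. Second, the paper exhibits the Lyapunov function
\[
H=\tfrac12(1+r^2)A\,\phi'^2+\phi^2,
\]
and checks from the system that $H'<0$. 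Together with $A\ge 1$ this bounds $\phi^2$ and $r^2\phi'^2$ for all $r$, which is the global bound you were missing. The asymptotics at infinity are then read off by writing the system in $\tau=\log r$, observing it is asymptotically autonomous, and invoking the standard theory plus the Lyapunov function to conclude convergence to the attracting fixed point $\phi=0$, $A=1$ with the expected decay. Your direct Taylor analysis at $x=\pi/2$ would be a legitimate alternative to that last step once the bounds are in place, but the monotonicity-of-$B$ argument and the Lyapunov function $H$ are the missing ideas.
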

\begin{proof}
 To prove this lemma it is convenient to use the radial coordinate $r=\tan{x}$ and return to the original field variables
\begin{equation}\label{}
  \phi(r)= f(x) \cos{x},\qquad A(r)=1+ B(x) \cos^2{x}.
\end{equation}
Then, equations \eqref{eq:1} and \eqref{eq:2} become
\begin{subequations}
  \begin{align}
    \label{eqphi}
&  (1+r^2) A \phi''+\left(r(1+r^2)A\phi'^2+(1+r^2)A'+\frac{2+4 r^2}{r} A\right)\phi'+2\phi=0, \\
\label{eqA}
&   (1+r^2)A'-\frac{1+3 r^2}{r} (1-A) -2 r \phi^2
+ r(1+r^2) A \phi'^2 = 0.
 \end{align}
\end{subequations}
The local solutions \eqref{static0} translate to
\begin{equation}\label{ics}
\phi(r)\sim c-\frac{1}{3} c r^2,\quad A(r)\sim 1+\frac{2}{3} c^2 r^2.
\end{equation}
We first observe that the function $B=(A-1)(1+r^2)$ is monotone increasing.
To see this, suppose that $B(r)$ has a maximum at some point $r_0>0$. Differentiating equation \eqref{eqA}, substituting $B'(r_0)=0$, and eliminating $\phi''(r_0)$ and $B(r_0)$ using equations \eqref{eqphi} and \eqref{eqA}, respectively, we get  after simplifications
\begin{equation}\label{b2prim}
  B''(r_0) = 2 \phi'^2+4 r^2 \phi^2 \phi'^2+4 (r \phi'+\phi)^2\vert_{r=r_0}\,,
\end{equation}
which is manifestly positive, contradicting that the point $r_0$ exists. Since $B(r)$ is positive for small $r>0$, this implies that $A(r)\geq 1$ for all $r$.
\vskip 0.1cm
\noindent Next, we define a function
\begin{equation}\label{H}
H=\frac{1}{2} (1+r^2) A \phi'^2 + \phi^2.
\end{equation}
Using the system (17) we obtain
\begin{equation}\label{dH}
  H'=-\frac{(1+r^2) A (3+r^2 \phi'^2)+2 r^2 \phi^2 + 3 r^2+1}{2r}\,,
\end{equation}
which is manifestly negative, hence $H(r)$ is a monotonically decreasing Lyapunov function. Since $A\geq 1$, it follows that $r^2 \phi'^2$ and $\phi^2$ remain bounded for all $r$.
\vskip 0.1cm
To determine the asymptotic behavior of solutions for $r\rightarrow \infty$ it is convenient to use the logarithmic radial variable $\tau=\log{r}$. In terms of $\tau$ the system (17)
is asymptotically autonomous for $\tau\rightarrow \infty$ and the limiting autonomous system is
\begin{subequations}
\begin{align}\label{eqphi_lim}
 & A \ddot \phi + (3+2\phi^2) \dot\phi +2 \phi=0,\\
\label{eqA_lim}
  &\dot A-3(1-A)-2\phi^2+A\dot\phi^2=0,
\end{align}
\end{subequations}
where dot denotes the derivative with respect to $\tau$ (by an abuse of notation, we use the same symbols for the original and limiting systems). From the general theory of asymptotically autonomous dynamical system \cite{aa, th} and the existence of the Lyapunov function $H$, it follows that the asymptotic behavior of solutions of the system (17) for $\tau\rightarrow \infty$ is governed by the above limiting system. Elementary analysis gives the attracting fixed point $\phi=0,\dot\phi=0,A=1$ with the leading order behavior
\begin{equation}\label{focus}
  \phi(\tau) = c_1 e^{-\tau} + c_2 e^{-2\tau}+\mathcal{O}(e^{-3\tau}),\qquad A(\tau)-1  = c_1^2 e^{-2\tau} + c_3 e^{-3\tau}+\mathcal{O}(e^{-4\tau}),
\end{equation}
where $c_k$  are free parameters, which are related to the free parameters $\alpha, b$, and $M$ in the expansions (6) and (7) by
\begin{equation}\label{ck}
  c_1=\alpha,\quad c_2=-b \alpha,\quad c_3=-3b \alpha^2 -M.
\end{equation}
This completes the proof.
\end{proof}

 The above Lemma ensures that for each $c$ the solution starting with the initial conditions \eqref{static0}
 automatically satisfies the Robin condition $f'(\pi/2)=b f(\pi/2)$ for some parameter $b$ (which depends on $c$). We will refer to these globally regular static solutions as the AdS Robin solitons (or just solitons for short) and denote them by $(f_s,B_s,\delta_s)$.
 The profiles of solitons can be easily determined numerically by integrating the system (14) with the boundary conditions \eqref{static0}. We note in passing that an analogous reasoning leads to a two-parameter family of hairy black holes  (where the second parameter is the horizon radius).

As far as we know,  the AdS Robin solitons and hairy black holes were first studied in the literature in the context of so called ``designer gravity" \cite{hh, hm}, however, to the best of our knowledge,  their existence remained unproven.

\section{Bifurcation analysis}
It is illuminating to look at the solitons from the viewpoint of the local bifurcation theory. To this end, consider the perturbation expansion of solitons for small $c$
\begin{equation}\label{pert}
 b=b_*+c^2 b_2+\mathcal{O}(c^4), \quad  f=c f_1 + c^3 f_3+\mathcal{O}(c^5),\quad B=c^2 B_2+ c^4 B_4+\mathcal{O}(c^6)\,.
 \end{equation}
 Inserting this expansion into the system (14) and requiring regularity at $x=0$,
at the lowest order we get
\begin{equation}\label{order1}
b_*=\frac{2}{\pi},\quad f_1=\frac{x}{\sin{x}}, \quad B_2=\frac{x}{\sin{x}}\,\left(-\cos{x}+\frac{x}{\sin{x}}\right)\,.
\end{equation}
At the third order equation (\ref{eq:1}) becomes
\begin{align}
  \label{eq:4}
f_3''&+2\cot x\;f_3'-f_3=-B_2\cos^2 x\;f_1''-\cot x \;((1-4\sin^2 x)B_2\nonumber \\
&+2\sin^2 xf_1^2)f_1'-(1-3\cos^2 x)B_2f_1+2\sin^2 x f_1^3\;.
\end{align}
Substituting $f_1$ and $B_2$, given in \eqref{order1}, into the right hand side and imposing  $f_3(0)=0$, we find
\begin{align}\label{f3a}
f_3(x) = \frac{1}{12\sin x} & \left( \left( 5-24 \zeta\left(3 \right) \right) x
 + 3x\cos{(2x)}\;-\frac{2x^3}{\sin^2 x}-3\sin{(2x)}\;\right. \nonumber \\
& \left. + 16 x^3 C_1(x) - 48 x^2 S_2(x) - 72x C_3(x) + 48 S_4(x)  \right),
\end{align}
where $\zeta$ is the Riemann zeta function and we defined the functions
\begin{equation}
S_n(x) = \sum_{k=1}^{\infty} \frac{\sin(2kx)}{k^n} , \qquad C_n(x) = \sum_{k=1}^{\infty} \frac{\cos(2kx)}{k^n} \;.
\end{equation}
From \eqref{f3a} we read off
\begin{align}
& f_3\left(\frac{\pi}{2}\right) = \frac{\pi}{48}\left(4-\pi^2(8\ln 2+1)+60\zeta(3)\right)\approx 0.754316\;,\label{f3pi2}\\
& f'_3\left(\frac{\pi}{2}\right) = \frac{2}{3}+\frac{\pi^2}{8}\left(8\ln 2 -1\right)- \frac{7}{2}\zeta\left(3\right) \approx 2.06686\;.\label{df3pi2}
\end{align}
At the fourth order equation (14b) becomes
\begin{align}
\cot x\; B_4'&+\frac{1}{\sin^2 x}B_4 = -B_2 \cos^2 x\;(\cos x\;f_1'-\sin x f_1)^2\nonumber \\
&-2(\cos x\;f_1'-\sin x f_1)(\cos x\;f_3'-\sin x f_3)+4f_1f_3\;.
\end{align}
Substituting \eqref{order1} and \eqref{f3a}  into the right hand side and requiring regularity at $x=0$, we find
\begin{align}\label{b4a}
B_4(x) &= \frac{1}{6}x \left(\cot x-\frac{x}{\sin^2 x} \right) \left(24\zeta(3)-5 \right) +4x \left( 2\cot x-\frac{3x}{\sin^2 x} \right)C_3(x)\nonumber \\
&+4x^2 \left(\cot x-\frac{2x}{\sin^2 x} \right)S_2(x)
 -\left( 6\cot x-\frac{8x}{\sin^2 x} \right)S_4(x)+\frac{3}{4} \cos^2{x}  \nonumber \\ & +\left(\frac{3}{4}+\frac{1}{3} x^2\right)x^2 \cot x
+ \frac{8x^4}{3\sin^2 x} C_1(x) +x^2 \left(\frac{1}{4\sin^2 x}-\frac{3}{4}-\frac{1}{3} x^4\right) \nonumber \\ &+\frac{2x^2}{\sin^2 x} - x \cot x \left(1-x^2+\frac{1}{4}\cos (2x)+\frac{x^2}{2\sin^2{x}}\right)\,,
\end{align}
from which we read off
\begin{equation}\label{b4p}
B'_4\left(\frac{\pi}{2}\right) = \frac{\pi}{48}\left(54+\pi^2(32\ln 2-11)\right)\approx 10.7566\;.
\end{equation}
Using  \eqref{f3pi2} and \eqref{df3pi2} and
 imposing  the Robin condition in the expansion \eqref{pert}, we get
\begin{equation}\label{b2}
  b_2=\frac{2}{\pi} \,\left(f_3'-\frac{2}{\pi} f_3\right)\big\vert_{\pi/2}=\frac{\pi}{6} (16 \ln{2} -1)+ \frac{1}{\pi}(1-12\zeta(3)) \approx 1.01009\,.
\end{equation}
The fact that $b_2$ is positive means that at $b_*$ we have a  supercritical pitchfork bifurcation where the AdS solution  bifurcates into a pair of solitons ($\pm f_s, B_s, \delta_s$). As usual, this kind of bifurcation is associated with exchange of linear stability and, indeed, in the next section we will show that for $b>b_*$ the AdS space becomes linearly unstable  whereas  the solitons  are linearly stable.

Using (7) and the expansion  \eqref{pert},  we get the approximation for the mass
\begin{equation*}
M_s= B'(\pi/2)-3 b \alpha^2 \simeq \frac{3\pi}{2} c^2+B_4'(\pi/2) c^4 - 3 \left(\frac{2}{\pi}+b_2 c^2\right) \left(\frac{\pi}{2} c+ f_3(\pi/2) c^3\right)^2,
\end{equation*}
which upon substitution of \eqref{f3pi2} and \eqref{b4p} yields
\begin{equation}\label{mass-soliton}
  M_s \simeq -\frac{\pi^2 b_2}{8} \, c^4=-\frac{\pi^2}{8 b_2}\, (b-b_*)^2\,.
\end{equation}

 It is instructive to rederive this result  along the lines of designer gravity \cite{hh}. Letting $\alpha=f(\pi/2)$ and $\beta=f'(\pi/2)$, we get from \eqref{pert} (in this paragraph `$=$'  means equality up to order $\mathcal{O}(c^4)$)
\begin{equation}\label{alpha-beta}
\alpha= f_1(\pi/2)\, c +  f_3(\pi/2)\,c^3,\quad \beta= f'_1(\pi/2)\,c+ f'_3(\pi/2)\,c^3\,,
\end{equation}
which can be viewed as the parametric equation of the curve in the $(\alpha,\beta)$ plane. Eliminating $c$ we get the function
\begin{equation}\label{beta}
\beta_s(\alpha)=\frac{2}{\pi} \alpha +\frac{4 b_2}{\pi^2}  \alpha^3,
\end{equation}
where the subscript `s' indicates that the function is associated with solitons.
Following the approach used in  designer gravity  we introduce the effective potential
\begin{equation}\label{potential}
  \mathcal{V}(\alpha)=2\int_0^{\alpha} \beta_s(\alpha') d\alpha' -  b \alpha^2=-(b-b_*) \alpha^2 +\frac{2b_2}{\pi^2} \alpha^4\,.
\end{equation}
By construction, critical points of the effective potential correspond to solitons. The key observation, made by Hertog and Horowitz in \cite{hh}, is that the value of the effective potential at the critical point is equal to the soliton mass. In our case, $\mathcal{V}'(\alpha)=0$ for
$\alpha_s^2=\frac{\pi^2}{4 b_2} (b-b_*)$ (and, of course, for $\alpha=0$ corresponding to the AdS space). Substituting this into \eqref{potential} we get  $M_s=\mathcal{V}(\alpha_s)$ which reproduces the formula \eqref{mass-soliton}. Note that $\mathcal{V}''(\alpha_s)>0$.

  Further from the bifurcation point, the soliton function $\beta_s(\alpha)$ and the corresponding effective potential $\mathcal{V}(\alpha)$ can be determined numerically\footnote{For large values of $c$, there develops a boundary layer near $x=\pi/2$  with exponentially shrinking width. Using the method of matched asymptotics one can show that both $\alpha$ and $b$ grow as $e^{c^2}$ for $c\rightarrow \infty$ which makes the numerics (in compactified variable $x$) cumbersome.}. We find that for each $b>b_*$ the effective potential has the shape of a Mexican hat (see Fig.~2) with exactly three critical points:  the local maximum at zero and two  global minima at $\pm \alpha_s$. This implies that the soliton solution is unique (modulo reflection symmetry) and suggests that it is stable.
    \begin{figure}[h!]
    \centering
\includegraphics[width=0.6\textwidth]{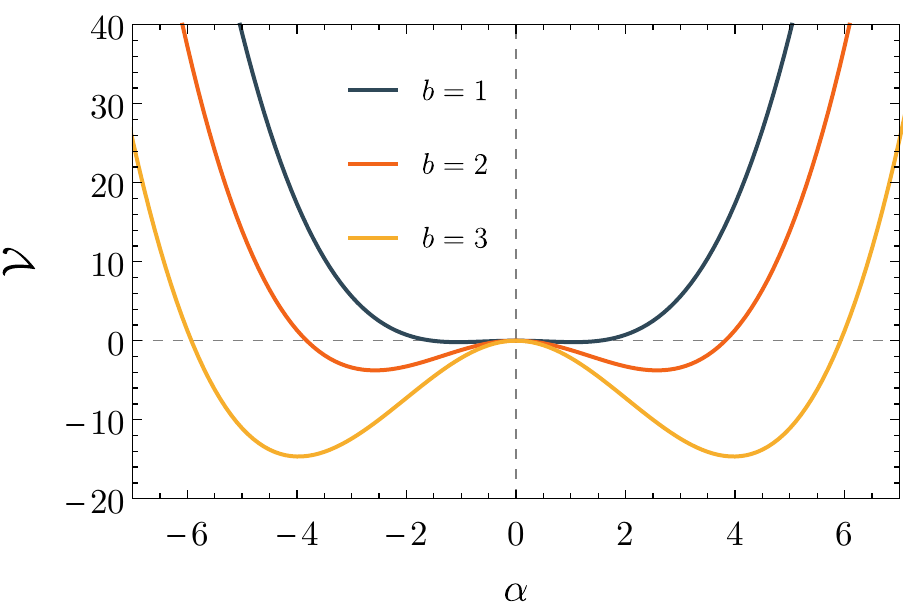}
\caption{\label{fig1} {\small Effective potentials for sample values of $b$.}}
\end{figure}
\vskip 0.2cm
\noindent
\emph{Remark.} It is natural to expect that for any given $b>b_*$ the soliton is the ground state, i.e.
   for any regular initial data satisfying the Robin condition \eqref{robin} there holds the inequality $M\geq M_s$ which saturates if and only if the data correspond  to the soliton   \cite{hh}.
   Our numerical constructions of initial data corroborate this conjecture but we have not been able to prove it (see \cite{hhol} for partial results in this direction).

\section{Linear stability analysis}
Linearizing the system (4) around the AdS solution ($f=B=\delta=0$) and separating time $f(t,x)=e^{i\omega t} v(x)$ we get the eigenvalue problem\footnote{The eigenvalue problem \eqref{eigen}  is a particularly simple  case of the master eigenvalue problem for linear perturbations of AdS space that was solved  by Ishibashi and Wald in full generality using the properties of hypergeometric functions \cite{iw}. For the reader's convenience we  reproduce their results in our special case using more elementary tools.}
\begin{equation}\label{eigen}
L v = \omega^2 v, \quad \mbox{where} \,\,\,L=-\frac{1}{\sin^2{\!x}}\,\partial_x\left(\sin^2{\!x}\, \partial_x\right)+1.
\end{equation}
The operator $L$ (which is just the polar conformal Laplacian on the 3-sphere) is symmetric on the Hilbert space $L^2\left([0,\pi/2], \sin^2{\!x}\, dx\right)$
and the Robin boundary condition
\begin{equation}\label{rc}
v'-b v\vert_{x=\frac{\pi}{2}}=0
\end{equation}
 provides a one-parameter family of its self-adjoint extensions.

For $\omega^2>0$ the  regular solution of \eqref{eigen}  is
\begin{equation}\label{v}
  v(x)=\frac{\sin(\omega x)}{\sin{x}}.
\end{equation}
Imposing the Robin condition \eqref{rc} we obtain the quantization condition
for the eigenfrequencies
\begin{equation}\label{robin-omega}
\omega=b \tan\left(\omega\pi/2\right).
\end{equation}

\begin{figure}[h]
  \includegraphics[height=0.38\textwidth]{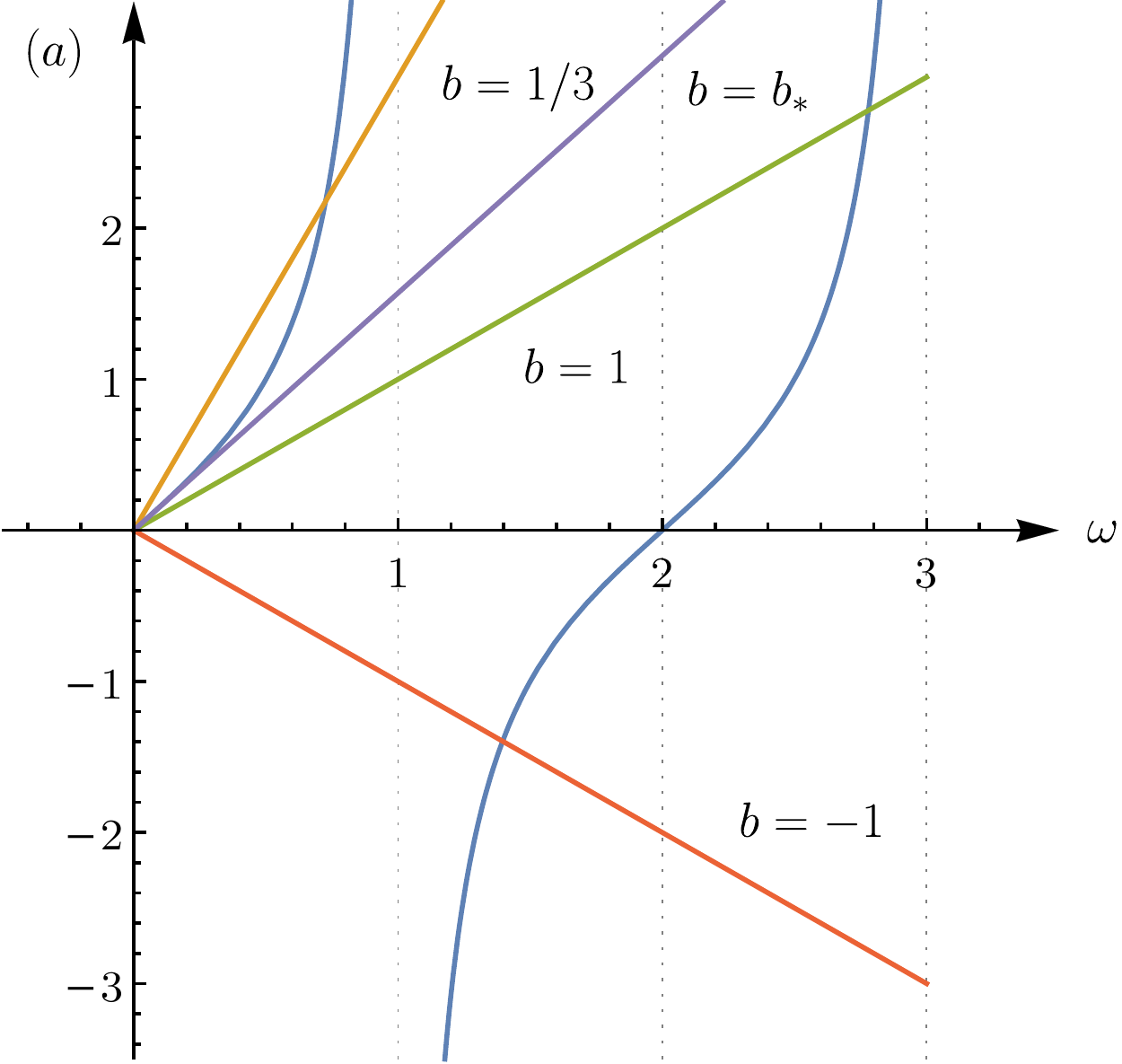}
  \hspace{4ex}
  \includegraphics[height=0.38\textwidth]{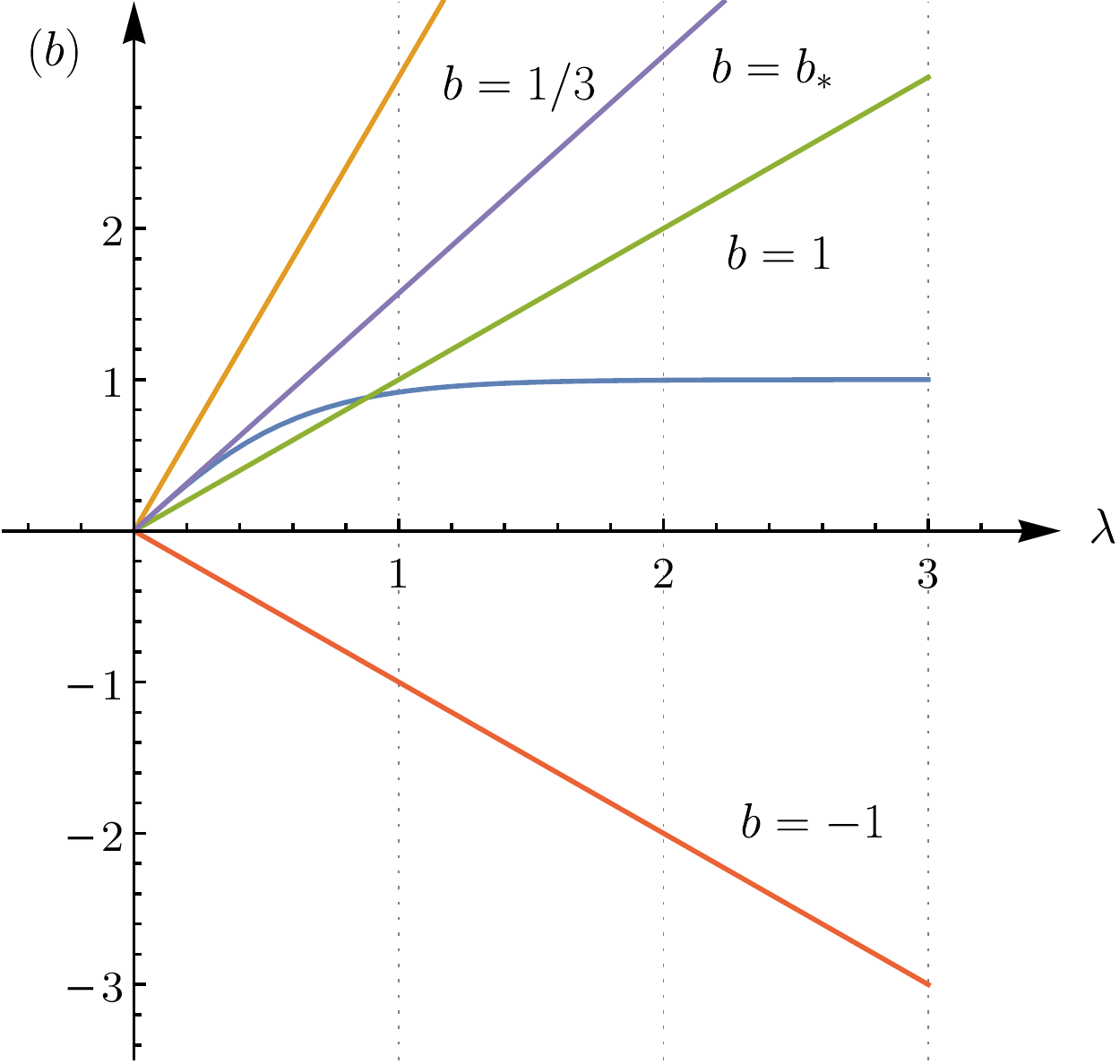}
  \captionsetup{width=\textwidth}
  \label{fig5}
  \caption{{\small Graphical solutions of the quantization
        conditions \eqref{robin-omega} and \eqref{robin-lambda}.}}
\end{figure}
From the graphical analysis shown in Fig.~2a we see that for each non-negative integer $n$ there is exactly one eigenfrequency $\omega_n$ such that
\begin{align*}
2n+1 &< \omega_n<2n+2 \quad \mbox{if} \quad b<0,\\
2n &<\omega_n<2n+1 \quad \mbox{if} \quad 0<b<\frac{2}{\pi}\;.
\end{align*}
For large $n$ the quantization condition \eqref{robin-omega} gives the asymptotically resonant spectrum
\begin{equation}\label{eigen-asym}
\omega_n=2n+1-\frac{b}{\pi n} +\mathcal{O}\left(\frac{1}{n^2}\right)\,.
\end{equation}

  The lowest eigenvalue $\omega_0^2$ vanishes at $b=b_*=2/\pi$; the corresponding  eigenfunction is the linearized static solution $f_1$ given in \eqref{order1}.  An elementary  perturbative calculation gives near $b_*$
  \begin{equation}
  \omega_0^2 \approx \frac{6}{\pi} (b_*-b).
  \end{equation}
  For general $b>b_*$ there is an exponentially growing mode $e^{\lambda_0 t} v_0(x)$, where  the exponent $\lambda_0=\sqrt{-\omega_0^2}$  is given by the unique positive root of the equation (see Fig.~2b)
 \begin{equation}\label{robin-lambda}
 \lambda=b \tanh\left(\lambda\pi/2\right)
  \end{equation}
 and the corresponding eigenfunction is $v_0(x)=\sinh(\lambda_0 x)/\sin{x}$.
\vskip 0.2cm
 Next, we look at the linear stability of solitons. Linearizing the system (4) around the soliton and separating time, we get the eigenvalue problem
\begin{equation}\label{eigen-sol}
L_s v = \tilde \omega^2 v,
\end{equation}
where
\begin{equation}\label{Ls}
L_s=-\frac{A_s e^{-\delta_s}}{\sin^2{\!x}}\,\partial_x\left(A_s e^{-\delta_s} \sin^2{\!x}\, \partial_x\right)+ A_s e^{-2\delta_s} U
\end{equation}
and
\begin{multline}
  U = 1+(3\cos^2{\!x}-1) B_s + 4  \sin^2{\!x}\, (2-\sin^2{\!x})  f_s^2
  \\
  + \sin{x} \cos{x} \,(8 \sin^2{\!x}-4) f_s f_s' - \cos^2{\!x} \,(2+4 \sin^2{\!x})  f_s'^2
  \\
 + 8 \sin^3{\!x} \cos^3{\!x} f_s^3 f_s'- 4 \sin^4{\!x} \cos^2{\!x} f_s^4
 - 4 \sin^2{\!x} \cos^4{\!x} f_s^2 f_s'^2\,.
\end{multline}
For $b$ slightly above $b_*$ (i.e. for small $c$), we have
\begin{equation}\label{P}
  L_s=L+c^2 P +\mathcal{O}(c^4),
\end{equation}
where the operator $P$ can be calculated using the expansions \eqref{pert}. To calculate the perturbations of eigenvalues we assume the following ansatz
\begin{equation}\label{pert-ansatz}
  v_n=c v_n^* + c^3 u_n + \mathcal{O}(c^5),\quad \tilde\omega_n^2={\omega_n^*}^2+\gamma_n c^2+\mathcal{O}(c^4),\quad b=b_*+b_2 c^2+\mathcal{O}(c^4),
\end{equation}
where ${\omega_n^*}^2$ and $v_n^*$ are the eigenvalues and normalized eigenfunctions of the operator $L$ at the bifurcation point and $b_2$ is given in \eqref{b2}. Substituting this ansatz into  the Robin boundary condition we get at the first and third order in $c$
\begin{equation}\label{robin-pert}
 {v_n^*}'(\pi/2)=b_* v_n^*(\pi/2),\qquad  u_n'(\pi/2)=b_* u_n(\pi/2) + b_2 v_n^*(\pi/2)\,.
\end{equation}
 Substituting the ansatz \eqref{pert-ansatz} into \eqref{eigen-sol}, we get at the  third order in $c$
\begin{equation}\label{c3}
  L u_n + P v_n^*={\omega_n^*}^2 u_n+\gamma_n v_n^*\,.
\end{equation}
Projecting this equation on $v_n^*$ and noting, via \eqref{robin-pert}, that
\begin{equation}\label{parts}
  (v_n^*, L u_n)=({v_n^*}' u_n-v_n^* u_n')\vert_{x=\frac{\pi}{2}}+(u_n,L v_n^*)=-b_2 {v_n^*}^2(\pi/2)+{\omega_n^*}^2 (u_n,v_n^*),
\end{equation}
we obtain  the leading order  approximation for the eigenvalues
\begin{equation}\label{born}
  \tilde \omega_n^2(c)\approx {\omega_n^*}^2 + \gamma_n c^2, \qquad \gamma_n=(v_n^*,Pv_n^*)-b_2 {v_n^*}^2(\pi/2).
  \end{equation}
In particular, for the lowest eigenvalue we obtain
\begin{equation}\label{omega0}
  \tilde\omega_0^2 \approx \gamma_0 c^2,
\end{equation}
where
\begin{equation}\label{gamma0}
  \gamma_0=32 \log{2} -2 +\frac{1}{\pi^2} \left(12-144 \zeta(3)\right)\approx 3.8582.
\end{equation}
The positivity of $\gamma_0$  confirms the expectation that the solitons are linearly
stable near the bifurcation point.
Solving the eigenvalue problem numerically, we find  that  the eigenvalues $\tilde \omega_n^2$ grow monotonically with $c$.
The  numerical values of the first few  eigenfrequencies (as measured by the central observer) for a small parameter $c=0.1$ (corresponding to $b\approx0.6467$) are displayed in Table~1.

\begin{table}[h]
  \centering
  \begin{tabular}{|c|cccccc|}
    \toprule
    $n$ & 0 & 1 & 2 & 3 & 4 & 5\\
    \midrule
    $\tilde \omega_n$ & 0.19735 & 2.87065 & 4.93028 & 6.95714 & 8.97363 & 10.98549 \\
    $\tilde \omega_n^{pert}$ & 0.19642 & 2.87062 & 4.93025 & 6.95711 & 8.97360 & 10.98546 \\
    \bottomrule
  \end{tabular}
  \vskip 1ex
  \caption{\small{The first six eigenfrequencies of linear
      perturbations around the soliton for the  parameter
      $c=0.1$. In the second row the approximate eigenfrequencies given by \eqref{born} are shown for comparison.}}
  \label{tab:Eigenfrequencies}
\end{table}
\noindent From the leading order WKB approximation \cite{bo} it follows that for large $n$
\begin{equation}\label{large_n}
  \tilde \omega_n = \frac{2n+1}{a} + \mathcal{O}\left(\frac{1}{n}\right),\qquad a=\frac{2}{\pi}\int_0^{\pi/2} A_s^{-1} e^{\delta_s} dx,
\end{equation}
which compares well with numerical results even if $n$ is not very large.

\section{Discussion}
The Einstein-Klein-Gordon-AdS system with  mass $\mu^2=\frac{2}{3} \Lambda<0$ is  well-behaved at the conformal boundary which makes it a good toy model for studying the role of boundary conditions in dynamics of asymptotically AdS spacetimes~\cite{f5}. In this paper we focused on the Robin boundary conditions and proved existence of a one-parameter family of solitons for $b>b_*$. We also demonstrated that the linearized perturbations around these solitons have no growing modes. A natural question is: are the AdS Robin solitons nonlinearly stable? Numerical simulations, to be reported in \cite{mm}, indicate a positive answer and provide evidence for  existence of plethora of time-periodic and quasiperiodic solutions, not only in the perturbative regime (which is expected in view of the non-resonant spectrum) but also, somewhat surprisingly, for large perturbations.

Of course, the analogous question of nonlinear stability arises for the AdS spacetime for $b<b_*$. However here, in contrast to the Dirichlet case \cite{br}, the numerical simulations are as yet not conclusive and we leave this question to future investigations.
\subsection*{Acknowledgements.} We thank Piotr Chru\'sciel, Oleg Evnin, Helmut Friedrich and Arthur Wasserman for helpful remarks. This work was supported in part by the Polish National Science
  Centre grant no.\ 2017/26/A/ST2/00530. PB and MM
acknowledge the support of the Alexander von Humboldt Foundation.

\end{document}